\documentclass[twocolumn,10pt]{revtex4-1}
\usepackage[utf8]{inputenc}  
\usepackage[T1]{fontenc}     
\usepackage[british]{babel}  
\usepackage[scaled=0.86]{berasans}  
\usepackage[colorlinks=true,allcolors=blue]{hyperref}  
\usepackage{graphicx} 
\usepackage[babel]{microtype}  
\usepackage{amsmath,amssymb,amsthm,bm,amsfonts,mathrsfs,bbm} 
\usepackage{setspace}
\usepackage{braket}
\usepackage{csquotes}

\usepackage{xspace}  
\usepackage{pgfplots}
\usepackage{verbatim}

\def\proj{{\hat{\cal P}}}

\newcommand\id{\leavevmode\hbox{\small1\kern-3.3pt\normalsize1}}

\newtheorem{theorem}{Theorem}

\begin{document}

\title{Several Trade off Features of Quantum Steering in Distributed Scenario}

\author{Arup Roy}
\email{arup145.roy@gmail.com}
\affiliation{Centre for Astroparticle Physics and Space Science(CAPSS), Bose Institute, Block EN, Sector V, Salt Lake, Kolkata-700091, India}

\author{Some Sankar Bhattacharya}
\email{somesankar@gmail.com}
\affiliation{Department of Computer Science, The University of Hong Kong, Pokfulam Road, Hong Kong}

\author{Amit Mukherjee}
\email{amitisiphys@gmail.com}
\affiliation{Optics and Quantum Information Group,The Institute of Mathematical Sciences,HBNI,C.I.T Campus,Taramani, Chennai-600113,India}

\author{Nirman Ganguly}
\email{nirmanganguly@gmail.com}
\affiliation{Department of Mathematics,Heritage Institute of Technology, Kolkata- 700107, India}

\author{Biswajit Paul}
\email{biswajitpaul4@gmail.com}
\affiliation{Department of Mathematics, South Malda College, Malda, West Bengal, India}

\author{Kaushiki Mukherjee}
\email{kaushiki_mukherjee@rediffmail.com}
\affiliation{Department of Mathematics, Government Girls' General Degree College, Ekbalpore, Kolkata-700023, India}

\begin{abstract}
	In the present work we address the question of how bipartite steering violation takes place among multi-partite systems (where each sub-system have Hilbert space dimension restricted to two) based on the maximal violations of the bipartite steering inequality of the reduced pairwise qubit systems. We have derived  a trade-off relation which is satisfied by those pair wise bipartite maximal steering violations, which physically can be understood as providing restrictions on the distribution of steering  among sub systems. For a three-qubit system, it is impossible that all pairs of qubits violate the steering inequality, and once a pair of qubits violates the steering inequality maximally, the other two pairs of qubits must both obey the steering inequality. We also present a complementarity relation between genuine entanglement present in a tripartite state and maximum bipartite steering violation by its reduced states.
\end{abstract}
\maketitle
\section{Introduction}
In 1935 Einstein, Podolsky and Rosen presented an argument regarding the incompleteness of Quantum Mechanics \cite{epr} which motivated Schr\"{o}dinger to conceive the celebrated concept of `steering'\cite{schr} in his famous paper named 'Spooky action at a distance'. Recently, Wiseman \emph{et al.} have developed this phenomena in the form of an operational task\cite{wisemanprl07,wisemanpra07}. They have argued that steering refers to a scenario where one party usually called Alice, wishes to convince the other party Bob that she can steer or construct the conditional states on Bob's side by making
measurements on her part. The aforementioned interpretation has recently stimulated interest in foundational research \cite{jevtic2014,rudolph2014,jevtic12014,brunneroneway2014,wittman2014,wiseman2014}.

\par
Apart form the foundational interest, the study of steering also finds applications in semi device independent scenario where only one party trusts his/her quantum device but the other party's device is untrusted. As a concrete example it has been shown that steering allows for secure quantum key distribution when one of the parties' device cannot be trusted. One big advantage in this direction is that such scenarios are experimentally less demanding than fully device-independent protocols (where both of the parties distrust their devices) and, at the same time, require less assumptions than standard quantum cryptographic scenarios.
\par
In 1964, John Bell sought a way to demonstrate that certain correlations appearing in quantum mechanics are incompatible with the notions of locality and reality aka local-realism, through an inequality involving measurement statistics only. Violation of such an inequality implies the usefulness of correlations for EPR argument. In 1969, Clauser-Horne-Shimony-Holt (CHSH) proposed a set of simple Bell inequalities which are easy to realize experimentally. In the same spirit of Bell's inequality in nonlocality, several steering inequalities (SIs) have been proposed \cite{reid,ou,caval,wal,wal1}, so that a violation of any such SI can render a correlation to be steerable. But an unavoidable hindrance to formalize such SIs follows from the fact that steering scenario is device-independent only on one-side. Recently Cavalcanti et al have proposed a CHSH-like inequality \cite{caval15} for quantum steering. They have derived an EPR-SI that is necessary and sufficient for a set of correlations in
the simplest two-party scenario involving two measurement settings per site and two outcomes per measurement, with mutually unbiased measurements by trusted party.

\par
One significant departure of quantum entanglement from classical correlations is monogamy. Namely, if a quantum system is entangled with another system then this imposes a restriction on its entanglement with the remaining systems. Several distributions of quantum entanglement have been noticed in multipartite scenarios\cite{Bennett96, Terhal04, Coffman00, Kim12}. Monogamy also constitutes a significant characteristic in the security in quantum key distribution\cite{Gisin02}.  Several studies have been carried out in this important area\cite{Toner,bell-monogamy}.Mainly, the trade off relations between Bell violations of pairwise qubit systems have been probed in the works.
\par
In the present work,by using a steering inequality based on three measurement settings, we lay down a distribution of the non-classical feature within multiqubit systems. In this context we have derived an analytical trade-off relation obeyed by the steering test of pairwise qubits in a three-qubit system. Furthermore, we prove the existence of a complementary relation between a genuine tripartite entanglement measure and maximal bipartite steering inequality violation followed by the status of trade off relation when the reduced bipartite states are subjected to global unitary operations or local filtering operations.
\par
 The rest of the article is divided as follows: in Sec.\ref{secsteer} we have gone through the basic notion of steering. In Sec.\ref{sectang} we have discussed about some genuine entanglement measure whereas local filtering operations have been briefly discussed in Sec.\ref{secfil}. In Sec.\ref{secresults} we have derived the trade off relation between bipartite steering violation of three qubit systems. Then we have analytically established the complementary relation between genuine entanglement measure vs bipartite steering inequality violation in Sec.\ref{sectrade1}. In Sec.\ref{sectrade2} and Sec.\ref{sectrade4} we have examined the status of this bipartite steering and Bell inequality violation in the scenario where each bipartite system is allowed to perform global unitary and local filtering operations.

\section{Brief review of Steering}\label{secsteer}
Let us first briefly review the steering scenario as introduced by Wiseman \emph{et al.} \cite{wisemanprl07,wisemanpra07}. Given a pair of systems at Alice and Bob, denote $\mathcal{D}_{\alpha}$ and $\mathcal{D}_{\beta}$ the sets of observables in the Hilbert space of Alice's and Bob's systems, respectively. An element of $\mathcal{D}_{\alpha}$ is denoted by $A$, with a set of outcomes labeled by $a\in\mathcal{L}(A)$, and similarly for Bob. The joint state $\rho_{AB}$ of the system is steerable by Alice iff it is not the case that for all $a\in\mathcal{L}(A)$, $b\in\mathcal{L}(B)$, $A\in\mathcal{D}_{\alpha}$, $B\in\mathcal{D}_{\beta}$, the joint probability distributions can be written in the form
\begin{equation}\label{steer}
P(a,b|A,B;\rho_{AB})=\sum_{\lambda}\wp(\lambda)\wp(a|A,\lambda)P(b|B;\rho_{\lambda}),
\end{equation}
where $\wp(a|A,\lambda)$ denotes an arbitrary probability distribution and $P(b|B;\rho_{\lambda})$ denotes the quantum probability of outcome $b$ given measurement $B$ on state $\rho_{\lambda}$. In other words the state $\rho_{AB}$ will be called steerable if it does not satisfy a LHV-LHS model. Note that, if for a given measurement strategy the correlation has a LHV-LHS model, this does not imply that the underlying state is not steerable, since there could be another strategy that does not. In \cite{caval} authors have developed a series of steering inequalities to check whether a bipartite state is steerable when both the parties are allowed to perform $\mathit{n}$  measurements on his or her part.
\begin{equation}\label{steering n}
F_n(\rho,\mu) =\frac{1}{\sqrt{n}} \Big|\sum_{i=1}^n\langle A_i\otimes B_i\rangle\Big| \leqslant 1,
\end{equation}

The inequalities for
$\mathit{n}=2,3$ are of the form:
\begin{equation}\label{steering 2}
F_2(\rho,\mu) =\frac{1}{\sqrt{2}} \Big|\sum_{i=1}^2\langle A_i\otimes B_i\rangle\Big| \leqslant 1,
\end{equation}

\begin{equation}\label{steering 3}
F_3(\rho,\mu) =\frac{1}{\sqrt{3}} \Big|\sum_{i=1}^3\langle A_i\otimes B_i\rangle\Big| \leqslant 1,
\end{equation}
where $A_i = \hat{u}_i\cdot\vec{s}$,  $B_i = \hat{v}_i\cdot\vec{s}$, $\vec{s}=(s_1,s_2,s_3)$ is a vector composed of the Pauli matrices, $\hat{u}_i\in\mathbb{R}^3$ are unit vectors, $\hat{v}_i\in\mathbb{R}^3$ are orthonormal vectors, $\mu=\{\hat{u}_1,\cdots,\hat{u}_n,\hat{v}_1,\cdots,\hat{v}_n\}$ is the set of measurement directions, $\langle A_i\otimes B_i\rangle=\text{Tr}(\rho A_i\otimes B_i)$, and $\rho\in\mathcal{H}_A\otimes\mathcal{H}_B$ is some bipartite quantum state.

\section{Tangle: A measure of genuine entanglement}\label{sectang}
One of the significant measure of genuine entanglement in three qubit systems is the tangle. The notion of tangle relies on the monogamy of quantum correlations\cite{Coffman00,monogamybunch}.
The tangle \cite{Coffman00} is defined as the quantum monogamy score \cite{mdcc} corresponding to the square of the bipartite entanglement measure, called the concurrence \cite{wootters-concurrence}.
Concurrence of a two qubit system is given as
$C(\rho_{AB}) = \max\{0,\lambda_1-\lambda_2-\lambda_3-\lambda_4\}$.
Here $\lambda_1,\dots,\lambda_4$ are the square roots of the eigenvalues of $\rho_{AB}\tilde{\rho}_{AB}$ in
decreasing order, $\tilde{\rho}_{AB}= (\sigma_y \otimes \sigma_y)\rho_{AB}^{*}(\sigma_y \otimes \sigma_y)$.
The complex conjugation $\rho_{AB}^{*}$ is taken in the computational basis, and $\sigma_y$ is the Pauli spin matrix.

Therefore, the tangle of a three-qubit state $\rho_{ABC}$ is given by \cite{Coffman00}
\begin{equation}
	\tau(\rho_{ABC}) = C^2_{A:BC} - C^2_{AB} - C^2_{AC}.
	\label{eq:tangle}
\end{equation}
\section{Brief review of Local Filtering Operations}\label{secfil}
 Other than Bell-CHSH nonlocality tests, nonlocality of a bipartite quantum system can be exploited by subjecting it to a sequence of measurements\cite{pop}. Local filtering operations form a particular type of sequential measurements and nonlocality of this form is called hidden nonlocality. A necessary and sufficient criterion for detecting hidden nonlocality of $\rho_{AB}$ with respect to Bell-CHSH inequality was derived in \cite{hidebel}. The criterion is discussed here.\\
Let $M$ denotes a $4\times 4$ real matrix with elements $M_{ij}$ where $M_{ij}$ are given by:\\ $M_{ij}$$=$$\textmd{Tr}(\rho_{AB}\sigma_{i}\otimes\sigma_{j}),\,i,j$$=$$0,1,2,3$ with $\sigma_0$ denoting $\mathbb{I}_2$.\\ Let $Q$$=$$\textmd{diag}(1,-,1,-1,-1)$ and $G_{\rho}^{AB}$$=$$QMQM^T.$ \\
$\rho_{AB}$ reveals hidden nonlocality with respect to Bell-CHSH inequality \textit{iff}:
\begin{equation}\label{filtercon}
    \Lambda_1(G_{\rho}^{AB})+\Lambda_2(G_{\rho}^{AB})>\Lambda_0(G_{\rho}^{AB}),
\end{equation}
where $\Lambda_i(G_{\rho}^{AB})(i=0,1,2,3)$ are the eigen values of $G_{\rho}^{AB}$ in descending order.
After discussing these quantitative preliminaries, we now organize our findings. We start by exploring restrictions(if any) over the share-ability of a non-classical correlation among the reduced bipartite states obtained from a tripartite quantum state.

\section{Trade off relations in steering inequalities}\label{secresults}
 Here we look at the trade-off relations regarding the steering inequality(\ref{steering n}) violation by the reduced states of any quantum state shared between three parties in 2-settings per site and 3-settings per site scenario:
\begin{theorem}: For any tripartite state $\rho_{ABC}\in \mathcal{H}_A\small{\bigotimes}\mathcal{H}_B\small{\bigotimes}\mathcal{H}_C$, shared between Alice, Bob and Charlie, the maximal violation of $2$ settings steering inequality(Eq.(\ref{steering 2})) by the pairwise bipartite reduced states satisfy the following trade-off relation:
\begin{equation}\label{s1}
\langle F_2^S\rangle_{AB}^2+\langle F_2^S\rangle_{AC}^2+\langle F_2^S\rangle_{BC}^2\leq 3.
\end{equation}
\end{theorem}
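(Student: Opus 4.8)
The plan is to express the maximal steering violation of each reduced qubit pair through that pair's spin--spin correlation matrix, and then to extract from the global three--qubit state a purity constraint that these three matrices must jointly satisfy. First I would write each two--qubit reduction in Bloch form, $\rho_{AB}=\tfrac14\big(\id\otimes\id+\vec a\cdot\vec{s}\otimes\id+\id\otimes\vec b\cdot\vec{s}+\sum_{i,j}(T_{AB})_{ij}\,s_i\otimes s_j\big)$, so that $\langle A_i\otimes B_i\rangle=\hat u_i^{\top}T_{AB}\hat v_i$ and $F_2(\rho_{AB},\mu)=\tfrac1{\sqrt2}\big|\hat u_1^{\top}T_{AB}\hat v_1+\hat u_2^{\top}T_{AB}\hat v_2\big|$. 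The single--pair maximisation is then done in two stages: over Alice's unit vectors $\hat u_i$ the optimum is $\hat u_i\parallel T_{AB}\hat v_i$, giving $\tfrac1{\sqrt2}\big(\|T_{AB}\hat v_1\|+\|T_{AB}\hat v_2\|\big)$; then over Bob's orthonormal pair $\hat v_1,\hat v_2$ the power--mean inequality $x+y\le\sqrt{2(x^2+y^2)}$ together with the Ky Fan bound $\hat v_1^{\top}T_{AB}^{\top}T_{AB}\hat v_1+\hat v_2^{\top}T_{AB}^{\top}T_{AB}\hat v_2\le t_1^2+t_2^2$ yields $\langle F_2^S\rangle_{AB}^2=t_1^2+t_2^2$, where $t_1\ge t_2\ge t_3\ge 0$ are the singular values of $T_{AB}$ (the square roots of the eigenvalues of $T_{AB}^{\top}T_{AB}$). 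In particular $\langle F_2^S\rangle_{AB}^2\le t_1^2+t_2^2+t_3^2=\|T_{AB}\|_{\mathrm F}^2=\sum_{i,j}\langle s_i\otimes s_j\rangle_{\rho_{AB}}^2$, and the same for the pairs $AC$ and $BC$.

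Next I would note that $\langle F_2^S\rangle_{AB}$, being the maximum over measurement settings of the modulus of a functional that is linear in $\rho_{AB}$ and hence in $\rho_{ABC}$, is a convex function of $\rho_{ABC}$; so is its square, and so is the whole left--hand side of (\ref{s1}). A continuous convex function on the compact convex set of tripartite states attains its maximum at an extreme point, so it suffices to prove (\ref{s1}) for $\rho_{ABC}=|\psi\rangle\langle\psi|$ pure. For such a state the Schmidt decomposition across the cut $C|AB$ has rank at most two because $C$ is a qubit, so $\tr\rho_{AB}^2=\tr\rho_C^2$; combined with the standard identities $\tr\rho_{AB}^2=\tfrac14\big(1+\|\vec a\|^2+\|\vec b\|^2+\|T_{AB}\|_{\mathrm F}^2\big)$ and $\tr\rho_C^2=\tfrac12\big(1+\|\vec c\|^2\big)$ this gives $\|T_{AB}\|_{\mathrm F}^2=1+2\|\vec c\|^2-\|\vec a\|^2-\|\vec b\|^2$, and cyclically for the other two pairs. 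Summing, the Bloch--length contributions cancel exactly and $\|T_{AB}\|_{\mathrm F}^2+\|T_{AC}\|_{\mathrm F}^2+\|T_{BC}\|_{\mathrm F}^2=3$; together with the bound of the first paragraph this gives $\langle F_2^S\rangle_{AB}^2+\langle F_2^S\rangle_{AC}^2+\langle F_2^S\rangle_{BC}^2\le 3$ for pure states, hence for all $\rho_{ABC}$.

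The step I expect to be the main obstacle is the single--pair optimisation: one must keep careful track of the asymmetry between Alice's free unit vectors and Bob's orthonormal pair and establish that the optimum is exactly $t_1^2+t_2^2$ --- it is precisely the orthonormality of Bob's directions that caps each $\langle F_2^S\rangle^2$ below $3$. Everything afterwards is linear--algebra bookkeeping. Two alternative routes are worth recording: (i) one can avoid the convexity reduction by proving directly, via the swap trick, that $\tr\rho_{AB}^2+\tr\rho_{AC}^2+\tr\rho_{BC}^2\le\tr\rho_A^2+\tr\rho_B^2+\tr\rho_C^2$ for every three--qubit state --- equivalently $\tr\rho_{ABC}^2+\tr(\rho_{ABC}\tilde\rho_{ABC})\le 1$ with $\tilde\rho=(\sigma_y\otimes\sigma_y\otimes\sigma_y)\rho^{*}(\sigma_y\otimes\sigma_y\otimes\sigma_y)$, which follows because $\sigma_y\otimes\sigma_y\otimes\sigma_y$ is an antisymmetric unitary --- and then $\sum_{\mathrm{pairs}}\langle F_2^S\rangle^2=3+4\big(\sum_{\mathrm{pairs}}\tr\rho^2-\sum_{\mathrm{parties}}\tr\rho^2\big)-\sum_{\mathrm{pairs}}t_3^2\le 3$; (ii) one may instead invoke the CHSH monogamy relation $\mathcal M(\rho_{AB})^2+\mathcal M(\rho_{AC})^2\le 8$ with $\mathcal M(\rho_{XY})=2\sqrt{t_1^2+t_2^2}$ and add its three cyclic instances. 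Equality in (\ref{s1}) is reached by the GHZ state (and trivially by fully product states), for which every $t_3=0$.
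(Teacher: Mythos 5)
Your argument is correct, and it reaches the paper's inequality by a genuinely more self-contained route. The paper simply quotes the closed form $\langle F_2^S\rangle^2=t_1^2+t_2^2$ from Costa--Angelo, identifies it with $B_{\rm Max}^2/4$ via the Horodecki criterion, and then invokes the Qin--Fei--Li-Jost monogamy relation $B_{\rm Max}^2(\rho_{AB})+B_{\rm Max}^2(\rho_{BC})+B_{\rm Max}^2(\rho_{AC})\le 12$, so its real content is outsourced to \cite{costa,horodecki,fei}. You instead (a) derive the single-pair bound $\langle F_2^S\rangle^2\le t_1^2+t_2^2\le\|T\|_{\mathrm F}^2$ directly from the two-stage optimisation (your Ky Fan plus power-mean step is exactly right, and correctly isolates the role of Bob's orthonormality; note that for the theorem only the $\le$ direction is needed), and (b) replace the cited mixed-state monogamy of \cite{fei} by a convexity reduction to pure states followed by the purity/Bloch identity $\|T_{AB}\|_{\mathrm F}^2+\|T_{AC}\|_{\mathrm F}^2+\|T_{BC}\|_{\mathrm F}^2=3$, which is precisely the identity underlying \cite{fei} but here proved in two lines from $\tr\rho_{AB}^2=\tr\rho_C^2$. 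What the paper's route buys is brevity and an explicit link between the steering functional and CHSH monogamy; what yours buys is a proof that stands on its own, makes the equality case transparent (pure states saturate the correlation-norm sum, with the deficit given by the discarded $t_3^2$ terms), and avoids needing the mixed-state version of the correlation-matrix bound at all. Your parenthetical alternatives (i) and (ii) essentially reconstruct the \cite{fei} route, so they are redundant with the paper rather than with your main argument; in particular the sign conventions in the state-inversion identity of (i) deserve a check before being relied upon, but nothing in your principal proof depends on them.
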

\begin{proof}
For any bipartite state $\rho_{AB}$(say), square of maximal violation of $2$ setting steering inequality(Eq.(\ref{steering 2})) is given in \cite{costa}:
\begin{equation}\label{s2}
\langle F_2^S\rangle_{AB}^2=\sum_{i=1}^3t_i^2-\min\{t_1^2,t_2^2,t_3^2\}
\end{equation}
where $t_i$ are singular values of the correlation matrix $T$$=$$(t_{ij}^{AB})_{3\times 3}$ of the state $\rho_{AB}.$
Clearly $\langle F_2^S\rangle_{AB}^2=\frac{B^2_{\textmd{Max}}(\rho_{AB})}{4}$, where $B_{\textmd{Max}}(\rho_{AB})$ is maximal Bell-CHSH violation\cite{horodecki}. Now $B_{\textmd{Max}}^2(\rho_{AB})$ is given in \cite{fei}:
$$B_{\textmd{Max}}^2(\rho_{AB})=4(\sum_{i,j=1}^3t_{ij}^{AB}(t_{ij}^{AB})^{*}-\min\{t_1^2,t_2^2,t_3^2\})$$
\begin{equation}\label{s3}
\leq 4\sum_{i,j=1}^3t_{ij}^{AB}(t_{ij}^{AB})^{*}.
\end{equation}
Using Eq.(\ref{s3}) for each of the three reduced states $\rho_{AB}$, $\rho_{BC}$ and $\rho_{AC}$, in \cite{fei} author has derived the monogamy relation for maximal Bell-CHSH violation:
\begin{equation}\label{F}
B_{\textmd{Max}}^2(\rho_{AB})+B_{\textmd{Max}}^2(\rho_{BC})+B_{\textmd{Max}}^2(\rho_{AC})\leq 12.
\end{equation}
This in turn implies the monogamy relation for $2$ setting steering inequality(Eq.(\ref{s1})).
\end{proof}
A change in the number of measurement settings create a drastic change in Bell CHSH scenario. States which are Bell nonlocal under 3 measurement settings do not violate a Bell inequality with two measurement settings. Hence, we probe below the trade off relations in the steering scenario beyond two measurement settings. 
Now if one considers 3 settings per site scenario, there exists a class of local realistic inequalities\cite{CollinsGisin04} inequivalent to the Bell-CHSH, which do not satisfy any monogamy relation. A pertinent question at this point one can ask is whether 3 setting steering inequalities of the form Eq.(\ref{steering 3}) abide by any monogamy relation or not. Interestingly we find that if the $3$ setting bipartite steering inequality is considered, a more stringent monogamy relation is obtained: \\
\begin{theorem}: For tripartite state $\rho_{ABC}$, shared between Alice, Bob and Charlie, the maximal violation of $3$ settings steering inequality(Eq.(\ref{steering 3})) by the pairwise bipartite reduced states satisfies the following trade-off relation:
\begin{equation}\label{s4}
\langle F_3^S\rangle_{AB}^2+\langle F_3^S\rangle_{AC}^2+\langle F_3^S\rangle_{BC}^2\leq 3.
\end{equation}
\end{theorem}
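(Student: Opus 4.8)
The plan is to follow the same route as in Theorem~1: express the maximal three-setting steering value of a two-qubit state through its correlation matrix, bound it by the squared Hilbert--Schmidt norm of that matrix, and then feed the three resulting single-pair inequalities into the same estimate on reduced correlation matrices that produced the Bell--CHSH monogamy relation Eq.(\ref{F}).

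First I would fix, for the pair $AB$, an orthonormal triple $\{\hat v_1,\hat v_2,\hat v_3\}\subset\mathbb{R}^3$ together with unit vectors $\hat u_1,\hat u_2,\hat u_3$, and write $\langle A_i\otimes B_i\rangle=\hat u_i^{T}T_{AB}\hat v_i$, where $T_{AB}$ is the correlation matrix of $\rho_{AB}$ with entries $t_{ij}^{AB}$ as in Eq.(\ref{s3}). Optimising over the $\hat u_i$ gives $\bigl|\sum_{i=1}^3\langle A_i\otimes B_i\rangle\bigr|\le\sum_{i=1}^3|T_{AB}\hat v_i|$, and because $\{\hat v_i\}$ is an orthonormal basis of $\mathbb{R}^3$ the quantity $\sum_{i=1}^3|T_{AB}\hat v_i|^2=\text{Tr}(T_{AB}^{T}T_{AB})=\sum_{i,j=1}^3 t_{ij}^{AB}(t_{ij}^{AB})^{*}$ is independent of the basis. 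Cauchy--Schwarz then yields $\bigl(\sum_i|T_{AB}\hat v_i|\bigr)^2\le 3\sum_{i,j}t_{ij}^{AB}(t_{ij}^{AB})^{*}$; dividing by $3$ and maximising over the measurement set $\mu$ gives $\langle F_3^S\rangle_{AB}^2\le\sum_{i,j=1}^3 t_{ij}^{AB}(t_{ij}^{AB})^{*}$, which is precisely the right-hand side of Eq.(\ref{s3}). The analogous bounds hold for the pairs $BC$ and $AC$.

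The remaining ingredient is the estimate $\sum_{i,j}|t_{ij}^{AB}|^2+\sum_{i,j}|t_{ij}^{BC}|^2+\sum_{i,j}|t_{ij}^{AC}|^2\le 3$, which is exactly the bound that, once combined with Eq.(\ref{s3}), produces Eq.(\ref{F}) in \cite{fei}. For self-containedness one can re-derive it by noting that $\rho_{ABC}\mapsto\sum_{i,j}|t_{ij}^{XY}|^2$ is convex, so it suffices to check it on pure three-qubit states; there $\text{Tr}(\rho_{XY}^2)=\text{Tr}(\rho_{Z}^2)$ and the Bloch parametrisation of the two-qubit marginals make the sum collapse to the constant $3$. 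Adding the three single-pair inequalities from the previous paragraph then gives $\langle F_3^S\rangle_{AB}^2+\langle F_3^S\rangle_{AC}^2+\langle F_3^S\rangle_{BC}^2\le 3$, as claimed.

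The step I expect to require the most care is the first one: one must make sure that the crude estimate $\langle F_3^S\rangle_{XY}^2\le\sum_{i,j}|t_{ij}^{XY}|^2$ is genuinely available here --- that, unlike the two-setting formula Eq.(\ref{s2}), no subtraction of a $\min\{t_1^2,t_2^2,t_3^2\}$ term is needed --- and that the prefactor $1/\sqrt3$ in Eq.(\ref{steering 3}) exactly absorbs the factor $3$ lost in passing from $\sum_i|T\hat v_i|$ to $\sum_i|T\hat v_i|^2$ via Cauchy--Schwarz. Computing the \emph{exact} maximal three-setting steering value (the analogue of Eq.(\ref{s2})) would instead require optimising $\sum_i|T\hat v_i|$ over all orthonormal triples, which is considerably more delicate; fortunately the trade-off relation needs only the inequality, so that optimisation can be sidestepped.
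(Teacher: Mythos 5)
Your proposal is correct and follows essentially the same route as the paper: bound $\langle F_3^S\rangle_{XY}^2$ by the squared Frobenius norm $\sum_{i,j}t_{ij}^{XY}(t_{ij}^{XY})^{*}$ of each reduced correlation matrix and then invoke the three-qubit bound of \cite{fei} that these three norms sum to at most $3$. The only (harmless) difference is that the paper quotes the exact Costa--Angelo value $\langle F_3^S\rangle^2=\sum_i t_i^2$ from \cite{costa}, whereas you derive just the needed inequality via optimisation over $\hat u_i$ plus Cauchy--Schwarz and also sketch a self-contained purity/convexity proof of the \cite{fei} ingredient.
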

For pure state this trade off relation is strict. But for mixed state it is not srtict.
\begin{proof}

Case 1:

For any bipartite reduced state $\rho_{AB}$ of a pure state $|\psi\rangle_{ABC}$,
$$ \langle F_3^S\rangle_{AB}^2=\sum_{i=1}^3t_i^2$$
$$=\langle F_2^S\rangle_{AB}^2+\min\{t_1^2,t_2^2,t_3^2\}$$
$$=\frac{B_{\textmd{Max}}^2(\rho_{AB})}{4}+\min\{t_1^2,t_2^2,t_3^2\}$$
$$=(\sum_{i,j=1}^3t_{ij}^{AB}(t_{ij}^{AB})^{*}-\min\{t_1^2,t_2^2,t_3^2\})+\min\{t_1^2,t_2^2,t_3^2\}$$
\begin{equation}\label{s5}
=\sum_{i,j=1}^3t_{ij}^{AB}(t_{ij}^{AB})^{*}
\end{equation}
Hence, from \cite{fei}
$$  \langle F_3^S\rangle_{AB}^2+\langle F_3^S\rangle_{AC}^2+\langle F_3^S\rangle_{BC}^2$$
$$=\sum_{i,j=1}^3t_{ij}^{AB}(t_{ij}^{AB})^{*}+\sum_{i,j=1}^3t_{ij}^{BC}(t_{ij}^{BC})^{*}+\sum_{i,j=1}^3t_{ij}^{AC}(t_{ij}^{AC})^{*}.$$
\begin{equation}\label{s6}
=3
\end{equation}
Case 2:
For any bipartite reduced state $\rho_{AB}$ of a mixed state  $\rho_{ABC}$,
$$ \langle F_3^S\rangle_{AB}^2=\sum_{i=1}^3t_i^2$$
$$=\langle F_2^S\rangle_{AB}^2+\min\{t_1^2,t_2^2,t_3^2\}$$
$$=\frac{B_{\max}^2(\rho_{AB})}{4}+\min\{t_1^2,t_2^2,t_3^2\}$$
$$=(\sum_{i,j=1}^3t_{ij}^{AB}(t_{ij}^{AB})^{*}-\min\{t_1^2,t_2^2,t_3^2\})+\min\{t_1^2,t_2^2,t_3^2\}$$
This gives Eq.(\ref{s5}).\\
Hence, from \cite{fei}
$$  \langle F_3^S\rangle_{AB}^2+\langle F_3^S\rangle_{AC}^2+\langle F_3^S\rangle_{BC}^2$$
$$=\sum_{i,j=1}^3t_{ij}^{AB}(t_{ij}^{AB})^{*}+\sum_{i,j=1}^3t_{ij}^{BC}(t_{ij}^{BC})^{*}+\sum_{i,j=1}^3t_{ij}^{AC}(t_{ij}^{AC})^{*}.$$
\begin{equation}\label{s6}
\leq 3
\end{equation}
\end{proof}

Having derived trade-off relation restricting distribution of steering nonlocality among reduced states, it will now be interesting to look for a relation between steering nonlocality present in reduced states and the genuine entanglement content of the original tripartite state.
\section{Trade-off between Genuine entanglement and maximum steering inequality violation}\label{sectrade1}
There are clear distinctions between W and GHZ states in terms of retaining entanglement under particle loss. While W states retain entanglement , GHZ states fail to do so. In the present section we have probed whether there is a trade-off relation between genuine entanglement and steering. 

We establish a relation between the bipartite three measurements steering inequality violation as in eq. \ref{steering 3} and the genuine tripartite correlation for an important class of three qubit pure states. In particular we show that there exists a complementary relation between the genuine tripartite quantum correlation measures and the bipartite steering inequality violation of three qubit pure states akin to a similar work in Bell violation earlier\cite{mishra}. We consider a single parameter family of genuinely tripartite entangled three qubit pure states that gives the maximum bipartite Bell inequality violation for a fixed amount of tripartite correlation. This family of the genuinely entangled three qubit pure states looks like
\begin{equation}
	\label{mdcc-class-state}
	\ket{\psi}_m = \frac{\ket{000} + m(\ket{010}+\ket{101})+ \ket{111}}{\sqrt{2+2 m^2}},
\end{equation}
where $m\in[0,1]$. These states belong to the $GHZ$ class  when $m\in[0,1)$. For $m=1$, the state belongs to the $W$ class as the tangle is zero at this point. We denote this class of states as the maximally Bell inequality violating (MBV) class of states. This class of states has been recognized as the maximally dense-coding capable (MDCC) \cite{mdcc}, for having maximal dense coding capabilities with fixed amount of genuine tripartite quantum correlations. Let three parties A, B and C share the following MDCC state:
\begin{eqnarray}
\ket{\psi_m}&=&\dfrac{1}{\sqrt{2(1+m^2)}}[\ket{000}\nonumber\\
&+&m(\ket{010}+\ket{101})+\ket{111}]
\end{eqnarray}
It is straight forward to calculate the 3 setting steering inequality violations by the three reduced states corresponding to the parties AB, BC and AC:
\begin{eqnarray}
	\langle F^2_3\rangle_{AB}=\langle F^2_3\rangle_{BC}&=&(\dfrac{1-m^2}{1+m^2})^2\\
	\langle F^2_3\rangle_{AC}&=&1+\dfrac{8m^2}{(1+m^2)^2}
\end{eqnarray}\cite{mishra}.
We denote the maximum violation of the 3 setting steering inequality among the three reduced state by $\langle F^2_3\rangle_{(\proj{\psi_m})}$, which is given by:
\begin{eqnarray}\label{maxsteer}
	\langle F^2_3\rangle_{(\proj{\psi_m})}&=&\langle F^2_3\rangle_{AC}=1+\dfrac{8m^2}{(1+m^2)^2}
\end{eqnarray}
The measure of genuine tripartite entanglement, tangle for the state $\ket{\psi}_m$:
\begin{eqnarray}\label{tang}
	\tau(\proj{\psi_m})&=&1-\dfrac{4m^2}{(1+m^2)^2}
\end{eqnarray}
Using Eq.(\ref{maxsteer}) and Eq.(\ref{tang}) one obtains the following trade-off relation between the maximum 3 setting steering violation by reduced states and the genuine entanglement content of the tripartite pure state $\ket{\psi}_m$:
\begin{eqnarray}	
	2\tau(\proj{\psi_m})&+&\langle F^2_3\rangle_{(\proj{\psi_m})}=3
\end{eqnarray}
On the basis of thorough numerical investigations, we conjecture that for any pure tripartite state, the following trade-off relation holds good:
 \begin{eqnarray}\label{trade off}
   2\tau(\proj{\psi_m})&+&\langle F^2_3\rangle_{(\proj{\psi_m})}\leq 3
 \end{eqnarray}
Following the technique used in \cite{mishra}, it can be shown that the trade-off relation (Eq.(\ref{trade off})) holds good for any mixed tripartite state also.

\section{Trade off relation in the context of Partial Collusion of parties}\label{sectrade2}
In this section we consider a scenario where a subset of parties can come together and perform a joint operation on their particles. In this paradigm one can define a stronger notion of monogamy relations. This is to say, whether a trade-off relation exists among the violations of Bell-CHSH by the reduced states, when any two parties (in the simplest case) are allowed to perform a joint operation on their particles. In this context we present the following result for tripartite pure qubit states:
\begin{theorem}
	A Bell-CHSH trade-off relation holds for a pure three qubit state if and only if at most one of all three reduced single particle states is not maximally mixed.
\end{theorem}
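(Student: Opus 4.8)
The plan is to recast the statement entirely in terms of the three single-particle reduced states $\rho_A,\rho_B,\rho_C$ and then to exploit the generalized Schmidt (Ac\'{\i}n) normal form of a pure three-qubit state. Two elementary facts do the translation. First, a one-qubit marginal $\rho_X$ is maximally mixed iff $\text{Tr}\,\rho_X^2=1/2$, equivalently $\det\rho_X=1/4$. Second, for a pure $\ket{\psi}_{ABC}$ the two-qubit marginal $\rho_{XY}$ has the same nonzero spectrum as the complementary one-qubit marginal $\rho_Z$ (so it has rank at most two); hence every Bell-CHSH quantity that enters the collusion scenario---whether $B_{\textmd{Max}}^2(\rho_{XY})$ when one pair is allowed a joint (global) operation on its shared state, or $B_{\textmd{Max}}^2(X|YZ)=4(1+2(1-\text{Tr}\,\rho_X^2))$ for the bipartition in which two parties merge---is controlled by, and in the extreme cases fixed by, the purities $\text{Tr}\,\rho_A^2,\text{Tr}\,\rho_B^2,\text{Tr}\,\rho_C^2$. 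I would then record the identity $\sum_i t_i^2=4\,\text{Tr}\,\rho_Z^2-2\,\text{Tr}\,\rho_X^2-2\,\text{Tr}\,\rho_Y^2+1$ for the singular values $t_i$ of the correlation matrix $T_{XY}$ of $\rho_{XY}$, together with $B_{\textmd{Max}}^2(\rho_{XY})=4(\sum_i t_i^2-\min_i t_i^2)$ and the fact that for a rank-at-most-two two-qubit state the largest $B_{\textmd{Max}}^2$ over its unitary orbit is attained at a Bell-diagonal state and equals $8-16\det(\cdot)$; this reduces the whole theorem to inequalities in the three purities.

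For the ``if'' direction I assume without loss of generality that $\rho_B$ and $\rho_C$ are both maximally mixed. Imposing $(\rho_B)_{00}=(\rho_C)_{00}=1/2$ and $(\rho_B)_{01}=(\rho_C)_{01}=0$ on the Ac\'{\i}n form $\lambda_0\ket{000}+\lambda_1e^{i\phi}\ket{100}+\lambda_2\ket{101}+\lambda_3\ket{110}+\lambda_4\ket{111}$ first forces $\lambda_2=\lambda_3$ and then, after a short case check, collapses the state up to local unitaries to the ``GHZ-like'' family $\ket{\psi}=\frac{1}{\sqrt{2}}(\ket{00}_{BC}\ket{a}_A+\ket{11}_{BC}\ket{b}_A)$ with $\ket{a},\ket{b}$ arbitrary qubit states. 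For such a state $\rho_{AB}$ and $\rho_{AC}$ are classical-quantum, hence separable and unable to violate CHSH, while $\rho_{BC}$ and the two relevant bipartite cuts are completely determined by $\langle a|b\rangle$; substituting these into the trade-off relation and invoking the pairwise CHSH monogamy $B_{\textmd{Max}}^2(\rho_{XY})+B_{\textmd{Max}}^2(\rho_{XZ})\le 8$ together with $B_{\textmd{Max}}^2(\rho_{XY})\le 4\sum_i t_i^2$ turns the verification into routine algebra.

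The substantive part is the converse: if at least two marginals, say $\rho_B$ and $\rho_C$, are not maximally mixed, I must exhibit a collusion (a joint operation on a suitable pair) that breaks the trade-off relation, uniformly over all such states. The line of attack is to use the above identity for $\sum_i t_i^2$ to show that once $\det\rho_B,\det\rho_C<1/4$ the quantities $B_{\textmd{Max}}^2(\rho_{BC})$, $B_{\textmd{Max}}^2(\rho_{AC})$ and the bipartite cut values cannot all stay small simultaneously, and then to apply to the pertinent pair the global unitary that rotates its reduced state onto the Bell-diagonal state of the same spectrum, driving that term up to $8-16\det(\cdot)$ and making the relevant sum overshoot the bound. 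The obstacle is that the three available unitaries and the three purities are coupled through the Frobenius-type identity, so the inequality must be optimized globally rather than one term at a time, while one simultaneously checks that the borderline states---those with exactly one non-maximally-mixed marginal, e.g.\ the MDCC family of Sec.~\ref{sectrade1}---still satisfy the relation. Handling this coupled optimization cleanly is where the real difficulty lies; the forward implication is essentially the bookkeeping of the previous paragraph.
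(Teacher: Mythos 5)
Your proposal has the right raw ingredients but, as written, neither direction is actually established, and the main reason is that the argument loses track of what the collusion scenario asks. The theorem concerns the trade-off \emph{after} a pair is allowed a joint (global unitary) operation on its two-qubit reduced state, so in the ``if'' direction the observation that $\rho_{AB}$ and $\rho_{AC}$ are classical--quantum, ``hence separable and unable to violate CHSH'', proves nothing: separability is not preserved by global unitaries, and a separable state can very well be driven to a CHSH-violating one. What is needed is \emph{absolute} Bell-CHSH locality, which is a purely spectral property. Since for a pure $\ket{\psi}_{ABC}$ the spectrum of $\rho_{AB}$ coincides with that of $\rho_{C}$, the pair state has eigenvalues $\{1/2,1/2,0,0\}$ exactly when the complementary marginal is maximally mixed, and then the maximum of $B_{\textmd{Max}}^{2}$ over its unitary orbit, $8\left[(\lambda_{1}-\lambda_{3})^{2}+(\lambda_{2}-\lambda_{4})^{2}\right]$, equals $4$, i.e.\ no violation is possible for any colluding unitary. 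You list essentially this formula (note that in ``$8-16\det(\cdot)$'' the determinant must refer to the complementary one-qubit marginal or to the $2\times 2$ block on the support, not to the rank-deficient two-qubit state, whose determinant vanishes), but you do not deploy it where it matters; instead you invoke the pairwise monogamy $B_{\textmd{Max}}^{2}(\rho_{XY})+B_{\textmd{Max}}^{2}(\rho_{XZ})\leq 8$, which is out of place here because after independent global unitaries on overlapping pairs the three numbers being compared are no longer marginals of a single tripartite state. The ``GHZ-like'' normal-form construction is likewise unnecessary once the spectral statement is in hand.

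The more serious gap is the converse, which you explicitly leave open (``handling this coupled optimization cleanly is where the real difficulty lies''). No coupled optimization is needed, because in this scenario ``the trade-off holds'' means that at most one pair can violate CHSH after the allowed joint operations, and ``is violated'' means at least two can. If two marginals, say $\rho_{B}$ and $\rho_{C}$, are not maximally mixed, then $\rho_{AC}$ and $\rho_{AB}$ each have a nondegenerate two-point spectrum $\{p,1-p,0,0\}$ with $p\neq 1/2$, so each can be taken by a suitable global unitary to a state with $B_{\textmd{Max}}^{2}=8\left(p^{2}+(1-p)^{2}\right)>4$; two pairs then violate and the trade-off fails. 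This term-by-term argument is exactly the paper's route: it cites the single lemma of \cite{ganguly} --- for a pure three-qubit state, $\rho_{ij}$ is absolutely Bell-CHSH local if and only if $\rho_{k}$ is maximally mixed --- and both directions follow in a few lines. Your spectral identities amount to a re-derivation of that lemma for pure tripartite states; if you state it as such and discard the normal form, the separability step and the monogamy inequality, your proof closes and coincides with the paper's.
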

\begin{proof}
In \cite{ganguly} authors demonstrated that the bipartite reduced state $\sigma_{ij}$ of any pure three qubit state $\ket{\psi_{ijk}}$ is absolutely Bell-CHSH local if and only if the remaining party's reduced state $\sigma_k$ is maximally mixed, where $i,j,k\in\{A,B,C\}$.

It is evident from the above result that if all the three or any two reduced single particle states are \emph{not} maximally mixed then the remaining all or two bipartite  states are not absolutely Bell-CHSH local. There exists such global unitaries whose implementation on those bipartite states will leave it as a Bell-CHSH nonlocal state. This clearly shows that the modified trade-off relation (i.e., the trade-off relation after the application of the \emph{suitable} global unitaries) does not hold here.

Consider the situation where any two (all three) of the reduced single particle states are maximally mixed, i.e., any two (all three) of the possible bipartite reduced states are absolutely Bell-CHSH local. These two (three) states, therefore, satisfy Bell-CHSH inequality after the application of global unitaries. These situations clearly satisfy the modified trade-off relation.

This provides the proof of sufficient part of the theorem. We discuss the necessary part in the following.

When the modified trade-off relation holds, \emph{either} any one of the three bipartite reduced states violates Bell-CHSH inequality \emph{or} none of those reduced bipartite states violates after the implementation of global unitaries on the reduced bipartite systems. It is therefore evident from the above mentioned result of \cite{ganguly} that to satisfy the modified trade-off relation at most one of the three reduced single particle states is not maximally mixed.
\end{proof}

\subsection*{Trade-off relation under filtering}\label{sectrade3}
Similar situation can arise when two of them are allowed to perform local filtering instead of global unitary. Then also monogamy may not hold if marginals of two of them are not maximally mixed.
\par
From the last theorem it is clear that existence of trade-off relation (after bipartite states are subjected to suitable global unitaries or local filtering operations) depends on the nature of reduced states. In this context, it will be interesting to explore whether such conclusion can be made from knowledge of the original state parameters only. We proceed with related discussions in next section.
\section{Three qubit pure states obeying trade-off relations in nonlocality and steering}\label{sectrade4}
Let $|\psi_{ABC}\rangle$ denote a pure tripartite state \cite{acin}:
\begin{equation}\label{state}
	|\psi_{ABC}\rangle=\lambda_0 |000\rangle+\lambda_1 |100\rangle+\lambda_2 |101\rangle+\lambda_3 |110\rangle+\lambda_4 |111\rangle
\end{equation}
where $\lambda_i(i=0,...,4)$ are all real and $\sum_{i=0}^4\lambda_i^2=1.$\\
\begin{theorem}
	Let $|\psi_{ABC}\rangle$ be shared between three parties Alice, Bob and Charlie. If the parties are allowed to perform local filtering operations, the usual trade-off of Bell-CHSH violation among all possible groupings of the three parties will be violated iff $\lambda_0\neq 0$ and at least one of the following criteria holds:
\begin{enumerate}
	\item  $\lambda_4<\textmd{Min}\{\lambda_2,\lambda_3\}$
	\item $(\lambda_2\lambda_3-\lambda_1\lambda_4)^2>(\lambda_0\lambda_4)^2$ and $\lambda_4<\textmd{Max}\{\lambda_2,\lambda_3\}.$
\end{enumerate}
\end{theorem}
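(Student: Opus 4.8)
The plan is to read ``the usual trade-off of Bell--CHSH violation among all possible groupings'' as the monogamy statement that, in a three-qubit state, at most one of the three reduced pairs $\rho_{AB},\rho_{AC},\rho_{BC}$ can violate the CHSH inequality — which for qubits is automatic (by Toner--Verstraete no two pairs sharing a party can both violate, and here all three pairs pairwise share a party; compare Eq.~(\ref{F})). Allowing local filtering, this trade-off ``is violated'' exactly when at least two of the three reduced two-qubit states can be brought to CHSH violation by local filtering on the two parties involved. Since filtering on $A,B$ and then tracing out $C$ is the same operation as filtering the reduced state $\rho_{AB}$, the claim reduces to a statement about filterability of the three reduced states individually, which I would decide with the standard criterion (Verstraete--Wolf / Lorentz normal form): writing $R_{\mu\nu}=\mathrm{Tr}(\rho\,\sigma_\mu\otimes\sigma_\nu)$ with $\sigma_0=\id$ and letting $s_0\ge s_1\ge s_2\ge s_3\ge 0$ be the signature-$(+,-,-,-)$ singular values of $R$, the state can be filtered to violate CHSH iff $s_1^2+s_2^2>s_0^2$.

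First I would dispose of $\lambda_0=0$: then $A$ is frozen in $\ket{1}$, so $\ket{\psi_{ABC}}=\ket{1}_A\otimes\ket{\phi}_{BC}$ and both $\rho_{AB}$ and $\rho_{AC}$ are product states, which no local filtering can render nonlocal; at most the single pair $BC$ is filterable, so the trade-off holds. Hence $\lambda_0\neq0$ is necessary, and I assume it henceforth.

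For $\lambda_0\neq0$ I would run the computation. Using the canonical form, write down $\rho_{AB},\rho_{AC},\rho_{BC}$ explicitly, exploiting the restricted support each inherits (e.g.\ $\rho_{AB}$ lives on $\mathrm{span}\{\ket{00},\ket{10},\ket{11}\}$, so all correlators involve only the $\lambda_i$); compute each $R$-matrix as a polynomial array in the $\lambda_i$, form $R^{T}\eta R$ with $\eta=\mathrm{diag}(1,-1,-1,-1)$, read off the $s_i$, and turn $s_1^2+s_2^2>s_0^2$ into an explicit polynomial inequality $P_{AB},P_{AC},P_{BC}$ for each pair. The residual $B\leftrightarrow C$ symmetry of the canonical form is the swap $\lambda_2\leftrightarrow\lambda_3$, so $P_{AC}$ is $P_{AB}$ with that substitution; I expect $P_{AB}$ to reduce essentially to comparing $\lambda_4$ with $\lambda_3$ (sharpened by a discriminant term) and $P_{BC}$ to be controlled by $\lambda_2\lambda_3-\lambda_1\lambda_4$ against $\lambda_0\lambda_4$. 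Finally, ``at least two of $P_{AB},P_{AC},P_{BC}$'' is a Boolean combination that I would simplify, using $\sum_i\lambda_i^2=1$ and a short split on which pair is allowed to stay local: $P_{AB}\wedge P_{AC}$ should collapse to criterion (1), $\lambda_4<\min\{\lambda_2,\lambda_3\}$, while $P_{BC}$ together with one of the two $A$-pairs should collapse to criterion (2), $(\lambda_2\lambda_3-\lambda_1\lambda_4)^2>(\lambda_0\lambda_4)^2$ with $\lambda_4<\max\{\lambda_2,\lambda_3\}$.

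The main obstacle is this last stage together with the $R$-matrix analysis: the Lorentz singular value decomposition is not a routine symmetric eigenproblem because of the indefinite metric, and the non-generic (non-diagonalizable $\Sigma$) cases must be handled separately, so extracting the $s_i$ in closed form for all three reduced states and then verifying that the ``$\ge 2$'' disjunction simplifies precisely to (1)$\vee$(2) — with no spurious region included or omitted and with the $\lambda_i=0$ boundaries checked — is where the real work sits. A good running check is to specialize to the GHZ and W states and to the MDCC family $\ket{\psi}_m$ of Section~\ref{sectrade1}, whose reduced-state data are already in hand.
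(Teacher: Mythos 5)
Your proposal follows essentially the same route as the paper: reduce the question to a necessary-and-sufficient filterability-to-CHSH-violation condition for each reduced pair $\rho_{AB},\rho_{AC},\rho_{BC}$ (the paper states these as $4\lambda_0^2(\lambda_3^2-\lambda_4^2)>0$, $4\lambda_0^2(\lambda_2^2-\lambda_4^2)>0$ and $(\lambda_2\lambda_3-\lambda_1\lambda_4)^2>(\lambda_0\lambda_4)^2$), then take the ``at least two of three'' Boolean combination, exactly as you outline. The only difference is that the paper simply asserts these three per-pair criteria without derivation, whereas you propose obtaining them from the Lorentz normal-form condition $s_1^2+s_2^2>s_0^2$ (with attention to the $\lambda_0=0$ and non-diagonalizable cases), which is the natural way to fill in that step and is consistent with the paper's stated conditions.
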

\begin{proof}
Let $\rho_{AB},$ $\rho_{BC}$ and $\rho_{AC}$ denote the reduced bipartite mixed states. Now if Alice and Bob perform suitable local filtering operations, then $\rho_{AB}$ violates Bell-CHSH inequality iff criterion given by Eq.(\ref{filtercon}) is satisfied:
\begin{equation}\label{state1}
	4\lambda_0^2(\lambda_3^2-\lambda_4^2)>0
\end{equation}
Similarly if Alice and Charlie perform suitable local filtering operations, then $\rho_{AC}$ violates Bell-CHSH inequality iff(Eq.(\ref{filtercon})):
\begin{equation}\label{state2}
	4\lambda_0^2(\lambda_2^2-\lambda_4^2)>0
\end{equation}
Analogously, Bell-CHSH violation is observed when $\rho_{BC}$ is subjected under suitable local filtering operations iff(Eq.(\ref{filtercon})):
\begin{equation}\label{state3}
	(\lambda_2\lambda_3-\lambda_1\lambda_4)^2-(\lambda_0\lambda_4)^2>0
\end{equation}
Clearly combination of the restrictions imposed over state parameters by the necessary and sufficient criteria given by Eqs.(\ref{state1},\ref{state2},\ref{state3}) simply point out the fact that  Bell-CHSH violation is observed in at least two of three possible groupings iff at least one of the two criteria holds:
\begin{enumerate}
	\item  $\lambda_0\neq 0$ and $\lambda_4<\textmd{Min}\{\lambda_2,\lambda_3\}$
	\item $\lambda_0\neq 0$ and $(\lambda_2\lambda_3-\lambda_1\lambda_4)^2>(\lambda_0\lambda_4)^2$ and $\lambda_4<\textmd{Max}\{\lambda_2,\lambda_3\}.$
\end{enumerate}
Hence the result.
\end{proof}
Interestingly Bell-CHSH violation is observed for all possible groupings \emph{iff} $\lambda_0\neq 0,$ $\lambda_4<\textmd{Min}\{\lambda_2,\lambda_3\}$ and $(\lambda_2\lambda_3-\lambda_1\lambda_4)^2>(\lambda_0\lambda_4)^2.$\\
\begin{theorem}
	If the parties are allowed to perform global unitary operations, the usual trade-off of Bell-CHSH violation among all possible groupings of the three parties will be violated iff at least one of the following criteria holds:
\begin{enumerate}
	\item  $\lambda_0^2(\lambda_4^2+\textmd{Max}\{\lambda_3^2,\lambda_2^2\})+(\lambda_2\lambda_3-\lambda_1\lambda_4)^2<\frac{1}{4}$
	\item $\lambda_0^2(\lambda_4^2+\textmd{Min}\{\lambda_3^2,\lambda_2^2\})+(\lambda_2\lambda_3-\lambda_1\lambda_4)^2<\frac{1}{4}$ and $\lambda_0^2(\lambda_2^2+\lambda_3^2+\lambda_4^2)<\frac{1}{4}.$
\end{enumerate}
\end{theorem}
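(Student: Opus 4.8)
The plan is to turn ``the modified trade-off is violated'' into a condition on the three single-qubit marginals of $|\psi_{ABC}\rangle$ and then make that condition explicit in the parameters $\lambda_0,\dots,\lambda_4$; this is essentially the parameter-space counterpart of the earlier ``at most one single-particle state is not maximally mixed'' theorem. First I would recall, from \cite{ganguly}, that a bipartite marginal $\sigma_{ij}$ of a pure three-qubit state is absolutely Bell--CHSH local precisely when the complementary single-qubit marginal $\sigma_k$ is maximally mixed; equivalently, a suitable global unitary renders $\sigma_{ij}$ Bell--CHSH violating iff $\sigma_k$ is \emph{not} maximally mixed, i.e.\ iff $\det\sigma_k<\tfrac{1}{4}$ (for a qubit state the determinant never exceeds $\tfrac{1}{4}$, with equality only in the maximally mixed case). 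Since a global unitary acting on $\sigma_{ij}$ only reshuffles its spectrum, and ``the usual trade-off is violated'' means --- as in the preceding section --- that at least two of the three pairs can be brought to violation, the statement to be proved reduces to: at least two of $\det\rho_A,\det\rho_B,\det\rho_C$ are strictly below $\tfrac{1}{4}$.

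Second, I would compute the three single-qubit marginals of the state in Eq.~(\ref{state}) and take their $2\times 2$ determinants, using $\sum_{i=0}^{4}\lambda_i^2=1$ to simplify $\rho_A$. This gives $\det\rho_A=\lambda_0^2(\lambda_2^2+\lambda_3^2+\lambda_4^2)$, $\det\rho_B=\lambda_0^2(\lambda_3^2+\lambda_4^2)+(\lambda_2\lambda_3-\lambda_1\lambda_4)^2$ and $\det\rho_C=\lambda_0^2(\lambda_2^2+\lambda_4^2)+(\lambda_2\lambda_3-\lambda_1\lambda_4)^2$, so that $\rho_{BC}$, $\rho_{AC}$, $\rho_{AB}$ can be made Bell--CHSH violating iff, respectively, $\det\rho_A<\tfrac{1}{4}$, $\det\rho_B<\tfrac{1}{4}$, $\det\rho_C<\tfrac{1}{4}$; write $(\mathrm{III})$, $(\mathrm{II})$, $(\mathrm{I})$ for these three conditions.

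Third comes the Boolean bookkeeping. Since $\det\rho_B$ and $\det\rho_C$ share the term $(\lambda_2\lambda_3-\lambda_1\lambda_4)^2$ and differ only through the interchange $\lambda_2^2\leftrightarrow\lambda_3^2$, the conjunction $(\mathrm{I})\wedge(\mathrm{II})$ is exactly the first displayed criterion (the one carrying $\textmd{Max}\{\lambda_3^2,\lambda_2^2\}$), whereas $(\mathrm{I})\vee(\mathrm{II})$ is the same inequality with $\textmd{Min}\{\lambda_3^2,\lambda_2^2\}$ in place of $\textmd{Max}$; and $(\mathrm{III})$ is precisely $\lambda_0^2(\lambda_2^2+\lambda_3^2+\lambda_4^2)<\tfrac{1}{4}$. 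Because ``at least two of $(\mathrm{I}),(\mathrm{II}),(\mathrm{III})$ hold'' equals $\big((\mathrm{I})\wedge(\mathrm{II})\big)\vee\Big(\big((\mathrm{I})\vee(\mathrm{II})\big)\wedge(\mathrm{III})\Big)$, this is exactly ``criterion~1 or criterion~2'', which is the claimed equivalence.

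The step requiring the most care is the second one: carrying out the partial traces of Eq.~(\ref{state}) and the two $2\times 2$ determinants without sign errors, and then checking that the logical rearrangement in the third step genuinely reproduces the stated $\textmd{Max}/\textmd{Min}$ split. By contrast, the reduction in the first step is just a restatement of the result already quoted from \cite{ganguly}, together with the trivial remark that a qubit is maximally mixed iff its determinant equals $\tfrac{1}{4}$, so no new machinery is needed there.
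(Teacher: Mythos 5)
Your proposal is correct and follows essentially the same route as the paper: it establishes, for each pair, the necessary and sufficient condition for Bell--CHSH violation under global unitaries (the paper's Eqs.~(\ref{state4})--(\ref{state6})) and then combines them through the same ``at least two of the three groupings'' Boolean argument that produces the Max/Min criteria. The only difference is that you derive those intermediate conditions explicitly from the result of \cite{ganguly} via the determinants of the single-qubit marginals, which in passing corrects the evident typo in Eq.~(\ref{state5}), where $\lambda_2^2$ should read $\lambda_3^2$.
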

\begin{proof}
Let $\rho_{AB},$ $\rho_{BC}$ and $\rho_{AC}$ denote the reduced bipartite mixed state. Now if Alice and Bob perform suitable global unitary operations, then $\rho_{AB}$ violates Bell-CHSH inequality \emph{iff}
\begin{equation}\label{state4}
	\lambda_0^2(\lambda_2^2+\lambda_4^2)+(\lambda_2\lambda_3-\lambda_1\lambda_4)^2<\frac{1}{4}.
\end{equation}
If Alice and Charlie perform suitable global unitary operations, then $\rho_{AC}$ violates Bell-CHSH inequality \emph{iff}
\begin{equation}\label{state5}
	\lambda_0^2(\lambda_4^2+\lambda_2^2)+(\lambda_2\lambda_3-\lambda_1\lambda_4)^2<\frac{1}{4}.
\end{equation}
Analogously, Bell-CHSH violation is observed when $\rho_{BC}$ is subjected under suitable global unitary operations \emph{iff}
\begin{equation}\label{state6}
	\lambda_0^2(\lambda_4^2+\lambda_2^2+\lambda_3^2)<\frac{1}{4}.
\end{equation}
Clearly combination of the restrictions imposed over state parameters by the necessary and sufficient criteria given by Eqs.(\ref{state4},\ref{state5},\ref{state6}) implies that  Bell-CHSH violation is observed in at least two of three possible groupings \emph{iff} at least one of the following two criteria holds:
\begin{enumerate}
	\item  $\lambda_0^2(\lambda_4^2+\textmd{Max}\{\lambda_3^2,\lambda_2^2\})+(\lambda_2\lambda_3-\lambda_1\lambda_4)^2<\frac{1}{4}$
	\item $\lambda_0^2(\lambda_4^2+\textmd{Min}\{\lambda_3^2,\lambda_2^2\})+(\lambda_2\lambda_3-\lambda_1\lambda_4)^2<\frac{1}{4}$ and $\lambda_0^2(\lambda_2^2+\lambda_3^2+\lambda_4^2)<\frac{1}{4}.$
\end{enumerate}
Hence the result.
\end{proof}
Interestingly Bell-CHSH violation is observed in all possible groupings \emph{iff} $\lambda_0^2(\lambda_2^2+\lambda_3^2+\lambda_4^2)<\frac{1}{4}$ and $\lambda_0^2(\lambda_4^2+\textmd{Max}\{\lambda_3^2,\lambda_2^2\})+(\lambda_2\lambda_3-\lambda_1\lambda_4)^2<\frac{1}{4}.$
\par
\begin{theorem}
If the parties are allowed to perform global unitary operations, the usual trade-off of steering inequality violation among all possible groupings of the three parties will be violated \emph{iff} at least one of the criteria given in Theorem.2 holds.
\end{theorem}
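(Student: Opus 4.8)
The plan is to deduce this statement from the global--unitary Bell--CHSH criterion established just above (referred to as Theorem~2 in the statement) by proving that, for each bipartite reduced state of a pure tripartite qubit state, the property \emph{``can be made to violate the $3$--setting steering inequality~(\ref{steering 3}) by some global unitary''} is \emph{exactly} the property \emph{``can be made to violate Bell--CHSH by some global unitary''}. The crucial point is that all three reduced states of~(\ref{state}) have Schmidt rank at most two across the relevant cut, so the two ``absolute'' properties will be governed by one and the same inequality on the state parameters, after which the combinatorics of Theorem~2 applies unchanged.

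First I would obtain a closed form for the steering quantity. Writing a generic two--qubit state in the Fano form $\tau=\tfrac14\big(\id\otimes\id+\vec a\cdot\vec s\otimes\id+\id\otimes\vec b\cdot\vec s+\sum_{ij}T_{ij}\,s_i\otimes s_j\big)$, one has $\langle F_3^S\rangle_\tau^2=\sum_i t_i^2=\text{Tr}(T^{\mathsf T}T)$, and expanding $\text{Tr}\big[(\tau\otimes\tau)(2\,\mathrm{SWAP}-\id\otimes\id)_{AA'}(2\,\mathrm{SWAP}-\id\otimes\id)_{BB'}\big]$ together with $\text{Tr}[(\tau\otimes\tau)\mathrm{SWAP}]=\text{Tr}(\tau^2)$ gives
\begin{equation}\label{eq:steerident}
\langle F_3^S\rangle_\tau^2=1+4\,\text{Tr}(\tau^2)-2\,\text{Tr}(\tau_A^2)-2\,\text{Tr}(\tau_B^2).
\end{equation}
Under a global unitary $\text{Tr}(\tau^2)$ is invariant while the marginal purities obey $\text{Tr}(\tau_A^2),\text{Tr}(\tau_B^2)\ge\tfrac12$, with equality iff the marginal is maximally mixed. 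These two minima can be attained simultaneously: every point of the eigenvalue simplex is the spectrum of a Bell--diagonal state (the affine map from the three correlation parameters onto the hyperplane $\sum_i\mu_i=1$ is a bijection), and Bell--diagonal states have maximally mixed marginals, so the unitary carrying $\tau$ to the Bell--diagonal state with $\tau$'s own spectrum is the optimiser. Hence, by~(\ref{eq:steerident}),
\begin{equation}\label{eq:steermax}
\max_{U}\langle F_3^S\rangle_{U\tau U^\dagger}^2=4\,\text{Tr}(\tau^2)-1,
\end{equation}
so $\tau$ can be steered with respect to~(\ref{steering 3}) after a suitable global unitary if and only if $\text{Tr}(\tau^2)>\tfrac12$.

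Next I would specialise to $\rho_{AB},\rho_{AC},\rho_{BC}$, the reduced states of~(\ref{state}). Each has exactly two nonzero eigenvalues $\mu_1,\mu_2$ with $\mu_1+\mu_2=1$, hence $\text{Tr}(\rho_{AB}^2)=\mu_1^2+\mu_2^2=1-2\mu_1\mu_2$, and the criterion of the previous paragraph reads $\mu_1\mu_2<\tfrac14$. The Gram determinant of the two unnormalised $C$--conditional vectors building $\rho_{AB}$ equals $\mu_1\mu_2=\lambda_0^2(\lambda_2^2+\lambda_4^2)+(\lambda_2\lambda_3-\lambda_1\lambda_4)^2$, which is exactly the quantity appearing in~(\ref{state4}); the same computation for $\rho_{AC}$ and $\rho_{BC}$ reproduces the quantities in~(\ref{state5}) and~(\ref{state6}). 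Thus, for each of the three groupings, ``steerable after a global unitary'' is \emph{the very same} condition on $\lambda_0,\dots,\lambda_4$ as ``Bell--CHSH nonlocal after a global unitary'' used in the proof of Theorem~2. Consequently the bookkeeping of Theorem~2 that decides when at least two of the three reduced states can be pushed out of the local set at once produces exactly the same two alternative criteria, and the steering trade--off is violated iff at least one of the criteria of Theorem~2 holds.

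The step I expect to be the main obstacle is~(\ref{eq:steermax}): one must show that the common lower bound $\tfrac12$ for the two marginal purities is attained \emph{simultaneously}, which needs the Bell--diagonal realisation of an arbitrary spectrum together with a check that the realising operator is genuinely a density matrix --- a point that has to be handled carefully because $\tau$ here is only rank two. Once~(\ref{eq:steermax}) is in hand, the matching of the reduced--state purities with the $\lambda$--parameter conditions of Theorem~2 and the conclusion are routine.
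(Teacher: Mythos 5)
Your argument is correct, but it is a genuinely different (and far more explicit) route than the paper's. The paper's entire proof is an appeal to analogy with the two preceding theorems plus a citation of the necessary-and-sufficient criterion for steerability under global unitaries from \cite{ganguly}; it never exhibits the criterion or the bookkeeping. You instead derive the criterion yourself: writing $\langle F_3^S\rangle^2=\mathrm{Tr}(T^{\mathsf T}T)=4\,\mathrm{Tr}(\tau^2)+1-2\,\mathrm{Tr}(\tau_A^2)-2\,\mathrm{Tr}(\tau_B^2)$ (your sign bookkeeping is right), noting unitary invariance of the global purity, and using the fact that any spectrum is realised by a Bell-diagonal state (maximally mixed marginals) to get $\max_U\langle F_3^S\rangle^2=4\,\mathrm{Tr}(\tau^2)-1$, i.e.\ steerable after a global unitary iff $\mathrm{Tr}(\tau^2)>\tfrac12$. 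Since every bipartite reduction of the pure state (\ref{state}) has rank at most two, $\mathrm{Tr}(\tau^2)>\tfrac12$ is equivalent to $\mu_1\mu_2<\tfrac14$, and your Gram-determinant computation correctly reproduces $\mu_1\mu_2$ as the quantities appearing in Eqs.~(\ref{state4})--(\ref{state6}) (indeed it does so more carefully than the paper, whose Eq.~(\ref{state5}) repeats $\lambda_2^2$ where $\lambda_3^2$ is evidently intended); for rank-two states the Bell-CHSH-after-unitary criterion $(\mu_1-\mu_4)^2+(\mu_2-\mu_3)^2>\tfrac12$ also collapses to $\mu_1\mu_2<\tfrac14$, so the two ``absolute'' properties coincide grouping by grouping and the combinatorics of the preceding theorem carries over verbatim. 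What your approach buys is a self-contained proof that also explains \emph{why} the steering and CHSH criteria agree here (both reduce to purity $>\tfrac12$ on rank-two states), whereas the paper buys brevity at the cost of leaning entirely on the external criterion. The only caveat, which you share with the paper rather than introduce, is the implicit reading of ``the trade-off is violated'' as ``at least two of the three reduced states can be made to violate''; your proof is complete under that convention.
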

\begin{proof}
The proof can be given in an analogous approach as that followed in the last two theorems together with the use of necessary and sufficient criterion that detects steerability of a bipartite state subjected to suitable global unitary operations \cite{ganguly}.
\end{proof}

\section*{Conclusion}\label{secconc}
Here we have analytically derived a trade off relation for three measurement settings steering violation among pairwise qubit systems for any arbitrary tripartite state. Then we have addressed the question how measure of genuine tripartite entanglement relates with three measurement settings steering violation by reduced states. We find that there exists a trade off relation between a measure of genuine tripartite entanglement and maximal violation of three settings steering inequality by bipartite reduced systems. Next we put forward a completely new question that is, what should be the status of monogamy relation if any two of the three parties are allowed to perform joint operations. In this context a necessary and sufficient criterion has been derived for three qubit pure states for satisfying Bell- CHSH monogamy relations. Questions for further study can be placed as follows- it would be interesting to see whether monogamy relations can be obtained for steering in a more than three settings scenario. Another area of natural generalization would be to consider higher dimensional systems for carrying out similar studies.
 \section*{Acknowledgment}
 The authors would like to thank Guruprasad Kar and Manik Banik for many stimulating discussions and fruitful suggestions.

\section*{Author contribution statement}
All the authors contributed equally to the manuscript.

\end{document}